\journal{Discrete Mathematics}
\newtheorem{theorem}{Theorem}
\newtheorem{lemma}{Lemma}
\newtheorem{proposition}{Proposition}
\newtheorem{corollary}{Corollary}
\newtheorem{remark}{Remark}
\newenvironment{proof}[1][\hspace{-1.0ex}]%
{\addvspace{1mm}{\bf Proof\hspace{1.0ex}{#1}.} }%
{\quad$\square$\par\addvspace{1mm}}
\begin{document}
\begin{frontmatter}
\title{The minimum volume of subspace trades}
\tnotetext[mytitlenote]{The results were presented in part at the International Conference
``Mal'tsev Meeting''
dedicated to 75th anniversary of Yuri L. Ershov,
May 3--7, 2015, Novosibirsk, Russia.}
\tnotetext[mytitlenote]{This is the author's version of the paper published
in Discrete Mathematics, Volume 340, Issue 12, 2017, Pages 2723-2731, DOI
\href{https://doi.org/10.1016/j.disc.2017.08.012}{10.1016/j.disc.2017.08.012} 
\copyright 2017 Elsevier B.V.  }

\author[mymainaddress]{Denis S. Krotov}
\ead{krotov@math.nsc.ru}

\address[mymainaddress]{Sobolev Institute of Mathematics, pr. Akademika Koptyuga 4, Novosibirsk 680090, Russia}

\begin{abstract}
 A subspace bitrade of type $T_q(t,k,v)$ is a pair $(T_0,T_1)$ of two disjoint nonempty collections of $k$-dimensional subspaces of a $v$-dimensional space $V$ over the finite field of order $q$ such that every $t$-dimensional subspace of $V$ is covered by the same number of subspaces from $T_0$ and $T_1$. In a previous paper, the minimum cardinality of a subspace $T_q(t,t+1,v)$ bitrade was established. We generalize that result by showing that for admissible $v$, $t$, and $k$, the minimum cardinality of a subspace $T_q(t,k,v)$ bitrade does not depend on $k$. An example of a minimum bitrade is represented using generator matrices in the reduced echelon form. For $t=1$, the uniqueness of a minimum bitrade is proved.
\end{abstract}

\begin{keyword}
Bitrades \sep Trades \sep Subspace designs
\MSC[2010] 05B30 \sep  05E99
\end{keyword}

\end{frontmatter}

\newcommand{\GrSS}[1]{\mathcal{F}^v_{#1}}
\newcommand{\GRss}[1]{{J}_q(v,#1)}
\newcommand{\HatH}[1]{\mathcal{H}(#1)}

\begin{abstract}
 A subspace bitrade of type $T_q(t,k,v)$ 
 is a pair $(T_0,T_1)$
 of two disjoint nonempty collections 
 of $k$-dimensional subspaces 
 of a $v$-dimensional space $V$ 
 over the finite field of order $q$
 such that every $t$-dimensional subspace of $V$
 is covered by the same number of subspaces
 from $T_0$ and $T_1$.
 In a previous paper, the minimum cardinality of a subspace $T_q(t,t+1,v)$ bitrade
 was established.
 We generalize that result by showing that for admissible $v$, $t$, and $k$,
 the minimum cardinality of a subspace $T_q(t,k,v)$ bitrade does not depend on $k$.
 An example of a minimum bitrade is represented using generator matrices in the reduced echelon form.
 For $t=1$, the uniqueness of a minimum bitrade is proved.
\end{abstract}

\section{Introduction}

Trades (see, e.g., \cite{HedKho:trades}) are traditionally used for constructing and studying combinatorial designs.
The relation of trades and designs can be briefly described as follows: if a design includes a trade subset, then this subset can be replaced by another (mate) trade to form a new design with the same parameters. 
This operation, also known as switching (see, e.g., \cite{Ost:2012:switching}), 
is widely studied not only for designs but for many different kinds of combinatorial objects (latin squares and hypercubes, codes, etc).
In view of the growing interest to the $q$-ary (subspace) generalizations of designs in the last few years, 
the study of corresponding analogs of trades becomes actual. 
Subspace trades are already used in the construction of subspace designs \cite{BKKL:large}.

In the current paper, we establish the minimum possible cardinality of a subspace ${T}_q(t,k,v)$ trade corresponding to the $q$-ary generalizations of ${S}(t,k,v)$ designs. 
The case ${T}_q(t,t+1,v)$ was solved in \cite{Cho:99}.
In \cite{Cho:99}, 
the equivalent language of null designs was used instead of trades,
and the result for the subspace trades appears as a partial case of more general theory of trades (null designs) in ranked posets.
In \cite{KMP:16:trades}, the same partial result 
(on the minimum subspace trades with $k=t+1$) 
was represented in a different general context, 
in terms of so-called clique trades in distance-regular graphs with regular systems of Delsarte cliques.
(Regretfully, the authors of \cite{KMP:16:trades} did not refer \cite{Cho:99} properly 
as they were not familiar with the theory of null designs.
As one of them, the author of the current paper is also responsible for this inconvenience.)
We will use the solution for $k=t+1$ to prove the general result for subspace trades.
As in \cite{KMP:16:trades}, the proof of the bound in the current paper exploits 
a weight distribution of the characteristic function of a bitrade 
(a bitrade, also known as a $2$-way trade, is a pair of mate trades).

There is one essential difference between the  ${T}_q(t,t+1,v)$ trades and the general case.
In the first case, the characteristic function of a bitrade is an eigenfunction of the Grassman graph, 
and the weight distribution with respect to a single vertex works well for establishing a lower bound on the number of nonzeros of the function.
In the general case, the eigenspectrum of the
bitrade characteristic function consists of several eigenvalues. 
To neutralize all components but one,
we modify the technique and count the weight distribution with respect to a special completely regular set. 
The invariance of the considered distributions allows to prove the tight lower bound on the cardinality of a bitrade, 
but at the moment, 
does not allow to generalize the distance-regular properties of minimum ${T}_q(t,t+1,v)$ 
trades that were established in \cite{KMP:16:trades}.

Another difference is that for $k=t+1$ a minimum ${T}_q(t,k,v)$ trade is a Steiner trade, 
that is, every $t$-subspace is covered by at most one $k$-subspace from the trade
(this follows from the explicit construction and the uniqueness of such a trade \cite{Cho:98}).
This is not necessarily the case in general; 
the known example of a minimum trade (see Section~\ref{ss:example}) is not Steiner if $k>t+1$.
The minimum size of a Steiner subspace ${T}_q(t,k,v)$ trade, $k>t+1$, remains unknown. 

Many results on subspace designs can be treated as analogs (sometimes, essentially more complicated) 
of similar facts in the theory of ordinary designs, and our result is not an exception.
The minimum volume $2^t$ of the ordinary trade was established in \cite{FranklPach:83} and~\cite{Hwang:86}.
Our method can be considered as the development of the method of \cite{FranklPach:83}; 
however, we use an alternative terminology of the theory of distance-regular graphs instead 
of the terminology of ranked posets, which was developed in \cite{Cho:99} in this context.
It would be interesting to find a subspace analog of the elegant inductive proof 
of the lower bound $2^t$ on the minimum trade volume found in~\cite{Hwang:86}.

%


In Section~\ref{s:pre} we introduce the notation and the main theorem.
Section~\ref{s:proof} contains a proof of the result.
In Section~\ref{s:concl} we discuss some open problems, including the uniqueness of minimum subspace trades 
(we prove it for $t=1$). 

\section{Notation and the main theorem}\label{s:pre}
\begin{itemize}
  \item $q$ is a prime power; $v$ is an integer, $v\ge 4$.
  \item ${\mathbb F}^v$ --- a $v$-dimensional space over the finite field ${\mathbb F}=\mathrm{GF}(q)$ of order $q$.
  \item $\GrSS{i}$ --- the set of all $i$-dimensional subspaces of ${\mathbb F}^v$, $i\in\{0,\ldots,v\}$.
  \item $\GRss{i}$ --- the \emph{Grassmann graph} on the vertex set $\GrSS{i}$; 
  two subspaces $X$, $Y\in \GrSS{i}$ are adjacent if $\dim(X\cap Y) = i-1$.
  \item ${d}(X,Y)$ --- the natural graph distance between two vertices $X$ and $Y$ of the graph. The distance from a vertex to a set of vertices is defined in a usual way:
  ${d}(X,\mathcal{Y}):=\min_{Y\in \mathcal{Y}} {d}(X,Y)$.
  \item $t$, $k$ --- integers satisfying $0\le t < k < v-t$.
  \item A pair $(\mathcal{T}_0,\mathcal{T}_1)$ of disjoint nonempty multisubsets of $\GrSS{k}$ 
is called a ${T}_q(t,k,v)$ \emph{subspace bitrade} if every subspace from $\GrSS{t}$
is covered by the same number of subspaces from $\mathcal{T}_0$ and $\mathcal{T}_1$
(the readers who are not interested in trades with repetitions can imply that $\mathcal{T}_0$ and $\mathcal{T}_1$ are ordinary sets, without multiplicities).
  \item We will refer to the value $|\mathcal T_0 \cup \mathcal T_1|$ as the \emph{cardinality} of a 
  bitrade $(\mathcal{T}_0,\mathcal{T}_1)$, while the value $|\mathcal T_0|$ is known as its \emph{volume} (from the definition, it follows that $|\mathcal T_0| = |\mathcal T_1|$).
\end{itemize}

Our goal is to prove the following theorem.
\begin{theorem}\label{th:asdiy}
Let the integers $t$, $k$, and $v$ satisfy $0\le t<k<v-t$. The 
minimum possible cardinality $|\mathcal T_0 \cup \mathcal T_1|$ of a subspace bitrade $(\mathcal T_0,\mathcal T_1)$ of type ${T}_q(t,k,v)$ equals\\[-2.0ex]
\begin{equation}
 \label{eq:main}
\prod_{i=0}^t(1+q^i)=\sum_{j=0}^{t+1} q^{\frac{j(j-1)}2}{\textstyle\big[{t+1 \atop j}\big]}_q
 \end{equation}
where 
$\big[{i \atop j}\big]_q :=\displaystyle \frac{[i]_q[i-1]_q\ldots [i-j+1]_q}{[1]_q[2]_q\ldots [j]_q}$,\quad $[r]_q:=1+q+\ldots+q^{r-1}$.
\end{theorem}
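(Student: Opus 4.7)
The plan is to establish matching upper and lower bounds. The upper bound uses a direct lifting construction from the already-known optimal $\mathrm{T}_q(t,t+1,v')$ case of \cite{KMP:trades}; the lower bound applies a spectral argument in the Grassmann scheme, refining the single-eigenspace technique of \cite{KMP:trades} by averaging against a completely regular set.

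\emph{Upper bound via lifting.} I would fix a decomposition $F^v=W\oplus U$ with $\dim W=v-k+t+1$ and $\dim U=k-t-1$; the hypothesis $k<v-t$ gives $\dim W>2t+1$, so by \cite{KMP:trades} there is a $\mathrm{T}_q(t,t+1,\dim W)$ bitrade $(\mathcal{S}_0,\mathcal{S}_1)$ inside $W$ of cardinality $\prod_{i=0}^t(1+q^i)$. Setting $\mathcal{T}_i=\{S+U:S\in\mathcal{S}_i\}$ gives collections of $k$-subspaces of $F^v$. The map $S\mapsto S+U$ is injective, since $S=(S+U)\cap W$, so cardinality and disjointness are preserved. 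To see that $(\mathcal{T}_0,\mathcal{T}_1)$ is a $\mathrm{T}_q(t,k,v)$ bitrade I use the standard fact (provable by a double-counting argument over flags $Y\subseteq Y'\subseteq S$ with $\dim Y=t'$, $\dim Y'=t$) that any $\mathrm{T}_q(t,t+1,v')$ bitrade is also a bitrade at every dimension $t'\le t$; then a $t$-subspace $X\subseteq F^v$ is contained in $S+U$ iff its image in $F^v/U\cong W$ is contained in $S$, and this image has dimension at most $t$.

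\emph{Lower bound, spectral setup.} Let $f=\chi_{\mathcal{T}_0}-\chi_{\mathcal{T}_1}$, so $|\mathcal{T}_0\cup\mathcal{T}_1|=|\mathrm{supp}(f)|$. Decompose the function space on $\Gr{k}$ as $V_0\oplus V_1\oplus\cdots\oplus V_k$, the eigenspaces of the Grassmann graph $\GR{k}$. The bitrade condition is equivalent to $f$ being orthogonal to the indicator of every $t$-subspace star, which in turn gives $f\in V_{t+1}\oplus\cdots\oplus V_k$. For $k=t+1$ only a single eigenspace appears and \cite{KMP:trades} applies directly. For $k>t+1$ I plan to replace the single base vertex of the classical weight-distribution technique by a carefully chosen completely regular subset $C\subseteq\Gr{k}$. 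For completely regular $C$ and any eigenfunction $\phi$ of $\GR{k}$ the sums $\sum_{X\in C_i}\phi(X)$ over the distance shells $C_i=\{X:\mathrm{d}(X,C)=i\}$ are determined by the intersection array of $C$ and the eigenvalue of $\phi$. Choosing $C$ so that $\chi_C$ is spectrally concentrated on $\{V_0,V_1,\ldots,V_{t+1}\}$ forces the contributions of $f$ from $V_{t+2},\ldots,V_k$ to drop out, leaving only the projection of $f$ onto $V_{t+1}$. A positivity or Cauchy--Schwarz estimate against $|\mathrm{supp}(f)|$ then yields $|\mathrm{supp}(f)|\ge\prod_{i=0}^t(1+q^i)$.

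\emph{Main obstacle.} The crux is exhibiting a completely regular set $C$ with the correct spectral support whose weight distribution against $f$ tightens precisely to $\prod_{i=0}^t(1+q^i)$. Plausible candidates arise from Schubert-type conditions attached to a fixed flag, or from orbits of suitable parabolic stabilizers whose characteristic functions classically lie in the low end of the Grassmann spectrum; I would need to verify both complete regularity and spectral support for whichever $C$ is chosen, and to compute its intersection numbers explicitly. The $q$-binomial form of \eqref{eq:main} should emerge from a count of $C$ or its distance shells, mirroring the $q$-binomial identity $\prod_{i=0}^t(1+q^i)=\sum_{j=0}^{t+1}q^{j(j-1)/2}\big[{t+1 \atop j}\big]_q$ already present in the theorem statement and matching the extremal structure of the lifted construction.
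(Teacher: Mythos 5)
Your upper bound is essentially the paper's construction: the paper realizes exactly this lift concretely, taking the minimal quadric $\mathrm{T}_q(t,t+1,2t+2)$ trade of \cite{KMP:trades} and adding a fixed $(k-t-1)$-dimensional complement to every member, and your verification via the fact that a $\mathrm{T}_q(t,t+1,\cdot)$ bitrade is also a bitrade at every dimension $t'\le t$ is the same argument. The overall spectral strategy for the lower bound is also the right one. However, the lower bound as written has three genuine gaps. First, the completely regular set you need is simply $\Hat{Z'}=\{Y\in\Gr{k}\mid Z'\subseteq Y\}$ for $Z'\in\Gr{t+1}$: the functions $\chi_{\Hat{X}}$, $X\in\Gr{t+1}$, span exactly $\Theta_0+\ldots+\Theta_{t+1}$, so each component $\chi^{(i)}$ of $f$ with $i\ge t+2$ is orthogonal to $\chi_{\Hat{Z'}}$, and since the weight distribution of an eigenfunction with respect to a completely regular set has the form $W^0\cdot(w^0,\ldots,w^r)$ with $W^0=\langle\chi^{(i)},\chi_{\Hat{Z'}}\rangle=0$, those components contribute nothing to the shell sums. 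You explicitly flag the identification of $C$ as the ``main obstacle'' and leave it unresolved; without it there is no proof. (Also, the decomposition runs over $\bar k+1=\min(k,v-k)+1$ eigenspaces, not $k+1$.)

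Second, the endgame is not a positivity or Cauchy--Schwarz estimate but an integrality argument: the weight distribution of $f$ with respect to $\Hat{Z'}$ is a scalar multiple of the fixed tuple $\bigl((-1)^jq^{j(j-1)/2}\big[{t+1\atop j}\big]_q\bigr)_{j=0}^{t+1}$ (the paper identifies this tuple not by solving the recurrence but by comparing with the explicit example, whose weight distribution is known from \cite{KMP:trades}); since the distribution has integer entries and the tuple starts with $1$, a nonzero multiplier has absolute value at least $1$, whence the number of nonzeros of $f$ at distance $j$ from $\Hat{Z'}$ is at least $q^{j(j-1)/2}\big[{t+1\atop j}\big]_q$ and summing over $j$ gives the bound. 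A generic Cauchy--Schwarz bound against $\|f\|^2=|\mathrm{supp}(f)|$ does not obviously produce this tight constant. Third, you never treat the degenerate case in which the projection of $f$ onto $V_{t+1}$ is zero: then every choice of $Z'$ yields the all-zero weight distribution and your argument says nothing. The paper handles this by noting that in that case $(\mathcal T_0,\mathcal T_1)$ is a $\mathrm{T}_q(t+1,k,v)$ bitrade and inducting on $\bar k-t-1$, using the monotonicity of \eqref{eq:main} in $t$; conversely, when the projection is nonzero some $Z'$ with $\langle f,\chi_{\Hat{Z'}}\rangle\ne 0$ must exist because that projection lies in the span of the $\chi_{\Hat{Z'}}$. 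These three points must be supplied for the proposal to become a proof.
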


An example of a minimum bitrade is given in Section~\ref{ss:example}.
Formula (\ref{eq:main}) is a known identity \cite[Equation (1.87)]{Stanley};
in the proof, we will refer to its right part.
Theorem~\ref{th:asdiy} is the subspace analog of the similar result for the classical design trades 
(where the role of ${\mathbb F}^v$ is played by a set of $v$ elements,
the role of the $i$-dimensional subspaces is played by the $i$-subsets, and the formulas hold with $q=1$)
 proved in \cite{FranklPach:83,Hwang:86}.

The next group of definitions and notations concerns the space of real-valued functions on $\GrSS{k}$.
Such functions will be denoted by lowercase Greek letters; the only exception is $\theta$, which will always denote an eigenvalue.

\begin{itemize}

\item For a subset $\mathcal{S}$ 
of the vertex set $\mathcal{V}$ 
of a graph, we denote
$$ \mathcal{S}^{(i)} := \{ Y\in \mathcal{V} \mid {d}(Y,\mathcal{S})=i\}. $$
\item A real-valued function $\varphi$ 
on the set $\mathcal{V}$ 
of vertices of a simple graph 
is called an \emph{eigenfunction}
with \emph{eigenvalue} $\theta$
if for all $X$ from $\mathcal{V}$ it holds
$$ \sum_{Y\in \{X\}^{(1)}} \varphi(Y) = \theta \varphi(X).$$

\item Two real-valued functions 
$\varphi$, $\psi$ on $\mathcal{V}$ 
are 
\emph{orthogonal}, $\varphi\perp\psi$,
if  
$\displaystyle \sum_{X\in \mathcal{V}} \varphi(X)\psi(X) = 0.$

\item $ \bar k:= \min(k,v-k)$ --- the diameter of $\GRss{k}$

\item $\theta_0$, $\theta_1$, \ldots, $\theta_{\bar k}$ 
--- the eigenvalues of $\GRss{k}$;
we assume $\theta_0>\theta_1> \ldots >\theta_{\bar k}$.

\item $\Theta_0$, $\Theta_1$, \ldots, $\Theta_{\bar k}$ 
--- the eigenspaces corresponding to the eigenvalues 
$\theta_0$, $\theta_1$, \ldots, $\theta_{\bar k}$,
respectively.

So, $\Theta_0+\Theta_1+\ldots+\Theta_{\bar k}$ 
is the space of all real-valued functions over 
$\GrSS{k}$.

\item Given $X\in \GrSS{i}$, 
where $i\le k$, we denote 
$\HatH{X}:=\{Y \in \GrSS{k} \mid X\subseteq Y\}$. Note that this notation does not reflect the dependence on $k$, as $k$ is fixed.

\item  $\chi_\mathcal{C}$ 
denotes the multiplicity function 
of a multiset $\mathcal{C}$ of elements from $\GrSS{k}$ 
(which is the characteristic $\{0,1\}$-function if $\mathcal{C}$ is a set).

\item For a pair 
$(\mathcal C_0,\mathcal C_1)$ 
of multisubsets of $\GrSS{k}$, 
denote 
$\chi_{(\mathcal{C}_0,\mathcal{C}_1)}:=\chi_{\mathcal{C}_0}-\chi_{\mathcal{C}_1}$.

\item  $\Lambda_i:=\big\langle \big\{ \chi_{\HatH{X}} \mid X\in\GrSS{i} \big\}\big\rangle$, $i\le k$, where $\langle\ldots\rangle$ denotes the linear span.

\item The \emph{weight distribution} 
of a real-valued function $\varphi$ on $\GrSS{k}$ 
with respect to a subset $\mathcal C$ 
of $\GrSS{k}$ is the sequence 
$$ \overline W_\varphi=\left(W_\varphi^0,W_\varphi^1,\ldots,W_\varphi^{r}\right),\quad
\mbox{where}\quad r=\max_{\mathcal C^{(j)}\ne\emptyset} j  \quad \mbox{and}\quad
 W_\varphi^j = \sum_{Y\in\mathcal C^{(j)}} \varphi(Y).$$

\end{itemize}

\section{Proof}\label{s:proof}
In this section, we will prove Theorem~\ref{th:asdiy}. 
The final proof is contained in Subsection~\ref{ss:th},
while in the Subsections~\ref{ss:eig}--\ref{ss:wd} 
we consider auxiliary lemmas.
\subsection{Eigenspaces of a Grassmann graph}\label{ss:eig}

In the next lemma, we remind known facts about the structure of the eigenspaces of a Grassmann graph.
A detailed proof for the case $2k\le n$ can be found in \cite{MP:Norton}.
For completeness, we include arguments for the general case.

\begin{lemma}\label{l:chb}
(i) Assume $j \le \bar k$. Then
$ \Lambda_j = \Theta_0+\Theta_1+\ldots+\Theta_{j}$; 
if $j>0$, then $\Theta_j = \Lambda_j \cap \Lambda_{j-1}^\perp$. 

(ii) Assume $\bar k\le j \le k$. Then $\Lambda_j = \Lambda_k$.
\end{lemma}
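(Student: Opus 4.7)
The plan is to prove (i) by induction on $j$ and deduce (ii) as an immediate corollary. Two ingredients are used throughout: an ascending chain $\Lambda_0 \subseteq \Lambda_1 \subseteq \cdots \subseteq \Lambda_k$ of spans, and the orthogonal decomposition of the space of real-valued functions on $\Gr{k}$ into the eigenspaces $\Theta_0\oplus\cdots\oplus\Theta_{\bar k}$ of the symmetric adjacency operator $A$ of $\GR{k}$.

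First I would establish the inclusion $\Lambda_{j-1}\subseteq\Lambda_j$ for every $1\le j\le k$ through the averaging identity
$$\sum_{X\in\Gr{j},\,X\supseteq X'} \chi_{\Hat{X}} \;=\; [k-j+1]_q\cdot\chi_{\Hat{X'}},\qquad X'\in\Gr{j-1},$$
which holds because, for any $k$-subspace $Y\supseteq X'$, the number of $j$-subspaces between $X'$ and $Y$ is $[k-j+1]_q$, and for $Y\not\supseteq X'$ both sides vanish. Since $[k-j+1]_q\ne 0$ for $j\le k$, $\chi_{\Hat{X'}}$ lies in $\Lambda_j$. The base case $j=0$ of (i) is immediate: $\Lambda_0=\langle\mathbf{1}\rangle=\Theta_0$, the constant function being the top eigenfunction of $A$.

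For the inductive step with $1\le j\le\bar k$, I would first verify that $\Lambda_j$ is $A$-invariant. Indeed, for $X\in\Gr{j}$ the value $(A\chi_{\Hat{X}})(Y)$ depends only on $\dim(X\cap Y)$, and a $q$-M\"obius inversion on the lattice of subspaces of $X$ expresses any such ``depends-only-on-$\dim(X\cap Y)$'' function as a linear combination of $\chi_{\Hat{X''}}$ with $X''\subseteq X$, all of which lie in $\Lambda_j$ by the chain of inclusions already established. Since $A$ is symmetric and $\Lambda_{j-1}\subseteq \Lambda_j$ is itself $A$-invariant, the orthogonal complement $\Lambda_j\cap\Lambda_{j-1}^\perp$ is an $A$-invariant subspace of $\Theta_j\oplus\cdots\oplus\Theta_{\bar k}$, hence a direct sum of some of those eigenspaces. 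The remaining task is to check that only $\Theta_j$ appears; I would do this by a dimension count, using the classical rank formula $\dim\Lambda_j=\big[{v\atop j}\big]_q$ (valid for $j\le\bar k$), which matches $\dim(\Theta_0\oplus\cdots\oplus\Theta_j)$ exactly and forces $\Lambda_j=\Theta_0\oplus\cdots\oplus\Theta_j$. The characterization $\Theta_j=\Lambda_j\cap\Lambda_{j-1}^\perp$ then follows from the pairwise orthogonality of distinct eigenspaces.

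For (ii), iterating the inclusion yields $\Lambda_{\bar k}\subseteq\Lambda_j\subseteq\Lambda_k$ for $\bar k\le j\le k$; applying (i) at $j=\bar k$ shows that $\Lambda_{\bar k}=\Theta_0\oplus\cdots\oplus\Theta_{\bar k}$ is already the entire function space on $\Gr{k}$, so all three coincide. The main obstacle is the eigenspace-matching step --- namely, ruling out any $\Theta_i$ with $i>j$ from $\Lambda_j\cap\Lambda_{j-1}^\perp$. Whether this is handled by the rank of the $j$-vs-$k$ incidence matrix or by an explicit expansion of $A\chi_{\Hat{X}}$ in the basis $\{\chi_{\Hat{X''}}:X''\subseteq X\}$ which reads off the new eigenvalue, the technical heart is $q$-binomial intersection-number combinatorics; the argument for $2k\le v$ in \cite{MP:Norton} can be adapted by appealing to the standard duality $\Gr{k}\leftrightarrow\Gr{v-k}$ to cover the remaining range.
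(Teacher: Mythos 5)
There is a genuine gap at the heart of your inductive step for (i). From the symmetry of $A$ and the $A$-invariance of $\Lambda_j$ and $\Lambda_{j-1}$ you may conclude that $V_j:=\Lambda_j\cap\Lambda_{j-1}^\perp$ is an $A$-invariant subspace of $\Theta_j\oplus\cdots\oplus\Theta_{\bar k}$, hence that $V_j=\bigoplus_{i\ge j}(V_j\cap\Theta_i)$; but it does \emph{not} follow that $V_j$ is a direct sum of \emph{full} eigenspaces. Every subspace of an eigenspace is $A$-invariant, and the $\Theta_i$ here have large dimension, so nothing you have established excludes, say, $V_j$ being an $m_j$-dimensional subspace of $\Theta_{j+1}$ (writing $m_i=\dim\Theta_i$; these multiplicities increase with $i$, so there is room). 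Consequently the dimension count $\dim V_j=\big[{v\atop j}\big]_q-\big[{v\atop j-1}\big]_q=m_j$ cannot by itself force $V_j=\Theta_j$. The missing ingredient is precisely that $V_j$ lies in a \emph{single} eigenspace --- equivalently, that $A\chi_{\Hat{X}}\equiv\theta\,\chi_{\Hat{X}} \pmod{\Lambda_{j-1}}$ for all $X\in\Gr{j}$ and one fixed scalar $\theta$. That is the nontrivial computational content of \cite{MP:Norton}, which the paper cites for exactly this purpose. You do gesture at it in your closing paragraph (``an explicit expansion of $A\chi_{\Hat{X}}$ \dots\ which reads off the new eigenvalue''), but only as an unexecuted alternative; your primary argument rests on the false ``direct sum of some of those eigenspaces'' step. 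Once the single-eigenspace containment is supplied, your dimension count (or the paper's multiplicity argument) does correctly single out $\theta_j$.

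The surrounding architecture is sound and partly different from the paper's. Your averaging identity for $\Lambda_{j-1}\subseteq\Lambda_j$ is the same as the paper's, and your $A$-invariance argument for $\Lambda_j$ (the value $(A\chi_{\Hat{X}})(Y)$ depends only on $\dim(X\cap Y)$, and such functions are triangular combinations of $\chi_{\Hat{X''}}$, $X''\subseteq X$) is correct. Where you genuinely diverge is in (ii): the paper proves $\Lambda_{\bar k}=\Lambda_k$ directly, exhibiting $\chi_{\{Y\}}$ as a combination of the functions $\psi_i=\sum_{\dim(X\cap Y)=\bar k-i}\chi_{\Hat{X}}$ via a triangular system, and then uses (ii) together with the count of distinct eigenvalues and their multiplicities to finish (i). You instead import the rank formula $\dim\Lambda_j=\big[{v\atop j}\big]_q$ for $j\le\bar k$ (Kantor's theorem) and read (ii) off from (i) at $j=\bar k$. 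That shortcut is legitimate if the rank theorem is properly cited and gives a quicker route to (ii), but it does not repair the gap above, which sits in (i) and which the paper closes by citing \cite{MP:Norton} for the single-eigenspace fact.
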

\begin{proof}
  Claim (i) is proved in \cite{MP:Norton} for the case $\bar k = k$ (claim (ii) is trivial in this case).
  
  Some statements from \cite{MP:Norton} work also for the case $\bar k < k$, with the same arguments.
  In particular,
  for all $j$ from $1$ to $k$, we have 
  \begin{equation}\label{eq:sklisa}
     \Lambda_{j-1} \subseteq \Lambda_j
  \end{equation}
  (this fact is rather simple as for any $X$ from $\GrSS{j-1}$ it holds 
  $$\chi_{\HatH{X}}=c\hspace{-1em} \sum_{Y\in\GrSS{j}: X\subset Y}\hspace{-1em} \chi_{\HatH{Y}}$$ 
  for some constant $c$) and, 
  moreover, the space $\Lambda_j \cap \Lambda_{j-1}^\perp$ is a subspace of an eigenspace of the graph
  \cite{MP:Norton}.
  
  If we have (ii), 
  then $\Lambda_0$ and $\Lambda_j \cap \Lambda_{j-1}^\perp$, 
  $j=1,\ldots,\bar k$ span the full space of real-valued functions 
  on $\GrSS{k}$. 
  Since $\GRss{k}$ is a distance-regular graph of diameter $\bar k$ \cite[9.3]{Brouwer},
  it has exactly $\bar k+1$ distinct eigenvalues \cite[4.1.B]{Brouwer}; 
  so, all these spaces are eigenspaces. 
  It follows from the multiplicities of the eigenvalues
  ($\theta_i$ has the multiplicity $\big[{v \atop i}]_q-\big[{v \atop i-1}]_q$  \cite[Theorem~9.3.3]{Brouwer})
  that the corresponding eigenvalues are $\theta_0$, $\theta_1$, \ldots, $\theta_{\bar k}$, respectively.
  
  It remains to prove (ii), which, taking into account (\ref{eq:sklisa}), is equivalent to $\Lambda_{\bar k} = \Lambda_k$.
  It is sufficient to show that for every $Y$ from $\GrSS{k}$, the function 
  $\chi_{\{Y\}}=\chi_{\HatH{Y}}$ is a linear combination of $\chi_{\HatH{X}}$, $X\in\GrSS{\bar k}$.
  For such $Y$, consider the $\bar k+1$ functions
  on $\GrSS{k}$
  $$\psi_i =\hspace{-2em} \sum_{X\in\GrSS{\bar k}:\,\mathrm{dim}(X\cap Y)=\bar k - i}\hspace{-2em}\chi_{\HatH{X}},\qquad 
  i=0,\ldots,\bar k,$$
  and the $\bar k+1$ sets
  $\mathcal{Y}^{(j)}$, $j=0,\ldots,\bar k$,
  where $\mathcal{Y}=\{Y\}$.
  It is straightforward that for every $i,j\in\{0,\ldots,\bar k\}$,
  \begin{itemize}
  \item  $\psi_i$ is constant on  $\mathcal{Y}^{(j)}$;
  \item if $i>j$, then $\psi_i$ is constantly zero on  $\mathcal Y^{(j)}$;
  \item if $i=j$, then $\psi_i$ is non-zero on  $\mathcal{Y}^{(j)}$.
  \end{itemize}
  So, $\psi_i = a_{ii}\chi_{\mathcal{Y}^{(i)}}
  +
  \ldots
  +
  a_{i\bar k}\chi_{\mathcal{Y}^{(\bar k)}}$, $a_{ii}\ne 0$, $i=0,\ldots,\bar k$.
  
  It is easy to see that each $j\in\{0,\ldots,\bar k\}$,
   the function $\chi_{\mathcal{Y}^{(j)}}$
  is a linear combination of $\psi_i$, $i=j,\ldots,\bar k$.
  In particular, this holds for $\chi_{\mathcal{Y}^{(0)}}=\chi_{\{Y\}}$.
\end{proof}

\subsection{A definition of bitrades in terms of eigenspaces}\label{ss:alt-def}

The next lemma gives an alternative definition of a subspace bitrade $(\mathcal T_0,\mathcal T_1)$ in terms
of the eigenspectrum of   $\chi_{(\mathcal T_0,\mathcal T_1)}$.
 
\begin{lemma}\label{l:juce}
  A pair $(\mathcal T_0,\mathcal T_1)$ of two disjoint multisubsets of $\GrSS{k}$ is a ${T}_q(t,k,v)$ bitrade if and only if 
  $$\chi_{(\mathcal T_0,\mathcal T_1)} \perp \Theta_j\qquad \mbox{for all $j=0,1,\ldots,t$}.$$
\end{lemma}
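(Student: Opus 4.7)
The plan is to translate the combinatorial bitrade condition into an orthogonality statement and then apply Lemma~\ref{l:chb}(i) to convert a statement about $\Lambda_t$ into one about $\Theta_0,\ldots,\Theta_t$.

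First I would unwind the definition of the bitrade: the condition that every $X\in\Gr{t}$ is covered by the same number of elements of $\mathcal T_0$ and of $\mathcal T_1$ is exactly
\[
 \sum_{Y\in\Hat{X}} \chi_{(\mathcal T_0,\mathcal T_1)}(Y) \;=\; |\mathcal T_0\cap\Hat{X}| - |\mathcal T_1\cap\Hat{X}| \;=\; 0
\]
for every $X\in\Gr{t}$. But the left-hand side is precisely the inner product $\langle \chi_{(\mathcal T_0,\mathcal T_1)},\chi_{\Hat{X}}\rangle$. Thus the bitrade property is equivalent to $\chi_{(\mathcal T_0,\mathcal T_1)}\perp\chi_{\Hat{X}}$ for every $X\in\Gr{t}$, which in turn is equivalent to $\chi_{(\mathcal T_0,\mathcal T_1)}\perp\Lambda_t$ (by definition of $\Lambda_t$ as the linear span of these characteristic functions).

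Next I would invoke Lemma~\ref{l:chb}(i). To apply it I first check that $t\le\bar k$: since $0\le t<k$ and $t<v-k$ (from $k<v-t$), we have $t<\min(k,v-k)=\bar k$. Hence Lemma~\ref{l:chb}(i) gives $\Lambda_t=\Theta_0+\Theta_1+\cdots+\Theta_t$. Because the eigenspaces of a symmetric operator are mutually orthogonal, this is in fact an orthogonal direct sum, and therefore orthogonality to $\Lambda_t$ is equivalent to orthogonality to each $\Theta_j$ for $j=0,1,\ldots,t$. Chaining the two equivalences yields the lemma.

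There is no real obstacle here: the entire content is the double reinterpretation (covering counts $\leftrightarrow$ inner products with $\chi_{\Hat X}$; span $\leftrightarrow$ orthogonal complement), and the eigenspace decomposition has already been isolated as Lemma~\ref{l:chb}(i). The only thing that needs a moment's care is verifying the index range $t\le\bar k$ so that Lemma~\ref{l:chb}(i) is applicable.
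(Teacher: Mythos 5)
Your proposal is correct and follows essentially the same route as the paper: reinterpret the covering condition as $\chi_{(\mathcal T_0,\mathcal T_1)}\perp\chi_{\Hat{X}}$ for all $X\in\Gr{t}$, hence $\perp\Lambda_t$, and then apply Lemma~\ref{l:chb}(i). Your explicit check that $t<\bar k$ is a small but welcome addition the paper leaves implicit (and note that orthogonality to a sum of subspaces is equivalent to orthogonality to each summand regardless of whether the sum is orthogonal, so that remark is not strictly needed).
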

\begin{proof}
  Indeed, for every $X\in\GrSS{t}$, the relation $\chi_{(\mathcal T_0,\mathcal T_1)} \perp \chi_{\HatH{X}}$ 
  is equivalent to the fact that $\mathcal T_0$ and $\mathcal T_1$ have the same number of elements covering $X$.
  So, $(\mathcal T_0,\mathcal T_1)$ is a bitrade if and only if 
  \begin{equation}\label{eq:perp}
    \chi_{(\mathcal T_0,\mathcal T_1)} \perp \Lambda_t.
  \end{equation}
  Applying Lemma~\ref{l:chb} finishes the proof.
\end{proof}

We now see why the condition $k< v-t$ is necessary: if $k\ge v-t$ (equivalently, $\bar k \le t$), 
then $\Lambda_t$ is the space of all functions and (\ref{eq:perp}) yields
$\chi_{(\mathcal T_0,\mathcal T_1)}\equiv 0$.
\begin{corollary}
Subspace ${T}_q(t,k,v)$ bitrades 
do not exist if $k\ge v-t$.
\end{corollary}
 \subsection{An example of a minimum subspace bitrade}\label{ss:example}
 
In the following lemma, 
we consider properties of a concrete subspace bitrade, 
which was discovered in \cite[Ch.\,18]{James:84} for $k=t+1$, in an alternative terminology,
and the generalization for $k>t+1$ was mentioned in \cite{Cho:98} 
(there were not any details given, 
but the idea how to lift an example from $k=t+1$ to $k>t+1$ is evident)
As will be proved later, this bitrade is minimum by cardinality.
The proof of the lemma is based on the partial case $k=t+1$ of the lemma statement, 
which was proved earlier \cite{James:84,KMP:16:trades}.
Because of the importance of this case, 
we give an alternative proof of the required properties of the bitrade ($k=t+1$) in  Appendix,
utilizing its representation via reduced row echelon matrices.
\begin{lemma}\label{l:duh}
Assume $0<t<k<v-t$. 
Let $\mathcal T\subset \GrSS{k}$, 
consist of all $k$-dimensional subspaces of ${\mathbb F}^v$ whose all vectors satisfy
\begin{equation}\label{eq:ex-min-tr}
 x_1x_{2t+2}+ x_2x_{2t+1}+\ldots + x_{t+1}x_{t+2}=0,\qquad x_{k+t+2}=\ldots=x_v=0,
\end{equation}
written as $v$-tuples $(x_1,\ldots,x_v)$ in some fixed basis $(e_1,\ldots,e_v)$.
Then $\mathcal T$ is partitioned into two independent sets $\mathcal T_0$, $\mathcal T_1$ and satisfies the following
assertions:

(A) $(\mathcal T_0,\mathcal T_1)$ is a subspace ${T}_q(t,k,v)$ bitrade.

(B) $ \displaystyle|\mathcal T|=|\mathcal T_0+\mathcal T_1|=\sum_{j=0}^{t+1} q^{\frac{j(j-1)}2}\big[{\textstyle {t+1 \atop j}}\big]_q$.

(C) Denote by $Z'$ the $(t+1)$-dimensional subspace of ${\mathbb F}^v$ 
whose all vectors satisfy $x_{1}=\ldots=x_{t+1}=x_{2t+3}=\ldots=x_{v}=0$.
The weight distribution of $\chi_{(\mathcal T_0,\mathcal T_1)}$ with respect to 
${\HatH{Z'}}$ is 
\begin{equation}\label{eq:wd-ex}
 \left((-1)^j q^{\frac{j(j-1)}2}\bigg[{t+1 \atop j}\bigg]_q\right)_{j=0}^{t+1}
\mbox{\ \ or\ \ } 
\left(-(-1)^{j} q^{\frac{j(j-1)}2}\bigg[{t+1 \atop j}\bigg]_q\right)_{j=0}^{t+1}.
\end{equation}
\end{lemma}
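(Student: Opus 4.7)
The plan is to reduce all three assertions to the $k=t+1$ case proven in \cite{KMP:trades} by quotienting out the radical of the quadratic form on the ambient space.

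Let $W:=\{x\in F^v:x_{k+t+2}=\ldots=x_v=0\}$, let $Q(x):=x_1x_{t+2}+\ldots+x_{t+1}x_{2t+2}$, and let $R:=\langle e_{2t+3},\ldots,e_{k+t+1}\rangle$. Then $R$ is the radical of the bilinear form associated with $Q|_W$, of dimension $k-t-1$, and the quotient $\bar W:=W/R$, identified with $W_0:=\langle e_1,\ldots,e_{2t+2}\rangle$ via the natural splitting $W=W_0\oplus R$, carries a nondegenerate hyperbolic quadratic form of rank $t+1$. I would first show that every $U\in\mathcal T$ contains $R$: since $\bar U:=U/R$ is totally singular in $\bar W$ we have $\dim\bar U\le t+1$, and the identity $\dim U=\dim\bar U+\dim(U\cap R)$ forces $\dim(U\cap R)\ge k-t-1=\dim R$. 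Hence $U\mapsto\bar U$ is a bijection from $\mathcal T$ onto the set $\mathcal M$ of maximal totally singular subspaces of $\bar W$, which under $\bar W\cong W_0$ is precisely the set produced by the $k=t+1$ instance of the lemma; denote the latter by $\mathcal S$ with the bipartition $\mathcal S=\mathcal S_0\sqcup\mathcal S_1$ supplied by \cite{KMP:trades}.

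Next I would use the classical fact that $\mathcal M$ splits into two generator families of equal size $\prod_{i=1}^t(1+q^i)$ (characterized by $\dim U_1-\dim(U_1\cap U_2)$ being even within a family), and define $\mathcal T_0,\mathcal T_1$ as the preimages of $\mathcal S_0,\mathcal S_1$ under the bijection (the latter matching the family split). Assertion~(2) follows from $|\mathcal T|=|\mathcal M|=\prod_{i=0}^t(1+q^i)$ together with the $q$-binomial identity \cite[Eq.~(1.87)]{Stanley}. Independence of each $\mathcal T_i$ in $\GR{k}$ is immediate: if $U_1,U_2\in\mathcal T$ are adjacent, then $\dim(U_1\cap U_2)=k-1$ gives $\dim(\bar U_1\cap\bar U_2)=t$, whence $\dim\bar U_1-\dim(\bar U_1\cap\bar U_2)=1$ is odd, so $U_1$ and $U_2$ lie in different families.

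For (1), given a $t$-dimensional $X\subseteq F^v$ the count $|\{U\in\mathcal T_i:X\subset U\}|$ vanishes unless $X\subseteq W$ and $Q|_X\equiv 0$. Assuming this and setting $\bar X:=(X+R)/R\subseteq\bar W$, the bijection $U\mapsto\bar U$ restricts to $\{U\in\mathcal T_i:X\subset U\}\leftrightarrow\{U'\in\mathcal S_i:\bar X\subseteq U'\}$, using $R\subseteq U$ to get $\bar X\subseteq\bar U\iff X\subseteq U$. The partial case gives that $(\mathcal S_0,\mathcal S_1)$ is a $\mathrm T_q(t,t+1,v)$ bitrade, and a one-line Gaussian-coefficient averaging shows any $\mathrm T_q(t,k',v)$ bitrade is automatically a $\mathrm T_q(s,k',v)$ bitrade for every $s\le t$; applying this with $s=\dim\bar X\le t$ equalizes the counts for $i=0,1$.

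For (3), since $Z'\cap R=0$, the projection $\pi:W\to\bar W$ is injective on $Z'$; moreover, $R\subseteq U$ lets one lift any element of $\bar U\cap\bar Z'$ back into $Z'$, giving $\dim(U\cap Z')=\dim(\bar U\cap\bar Z')$ for every $U\in\mathcal T$. A short calculation (maximizing $\dim(Y\cap Y')$ over $Y'\supseteq Z'$ with $\dim Y'=k$) yields $\mathrm{d}(Y,\Hat{Z'})=t+1-\dim(Y\cap Z')$ in $\GR{k}$, and the same identity holds in $\GR{t+1}$. Hence the weight distributions of $\chi_{(\mathcal T_0,\mathcal T_1)}$ with respect to $\Hat{Z'}$ in $\GR{k}$ and of $\chi_{(\mathcal S_0,\mathcal S_1)}$ with respect to $\Hat{Z'}$ in $\GR{t+1}$ coincide termwise, and (\ref{eq:wd-ex}) follows from the $k=t+1$ case. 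The delicate point is checking the bipartition matching in step~2: that the generator-family parity survives the quotient, so that the bipartition from \cite{KMP:trades} really agrees with the family split of $\mathcal M$. This is precisely what produces the sign ambiguity (\textbf{or}) in (\ref{eq:wd-ex}).
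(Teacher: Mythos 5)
Your proposal is correct and follows essentially the same route as the paper: both arguments hinge on showing every member of $\mathcal T$ contains $R=\langle e_{2t+3},\ldots,e_{k+t+1}\rangle$, reducing via $Y\mapsto Y\cap\langle e_1,\ldots,e_{2t+2}\rangle$ (your quotient by the radical) to the $k=t+1$ dual-polar-graph trade of \cite{KMP:trades}, invoking the fact that a $\mathrm{T}_q(t,k',v)$ bitrade is a $\mathrm{T}_q(s,k',v)$ bitrade for $s\le t$, and matching distances to $\Hat{Z'}$ with distances to $Z'$ in $\GR{t+1}$. Your extra details (the dimension count forcing $R\subseteq U$, the generator-family parity argument for independence) only flesh out steps the paper asserts without proof, and the "delicate" bipartition-matching worry at the end is moot since independence transfers directly from the $k=t+1$ case through the adjacency-preserving bijection.
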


\begin{proof}
 (A) Every subspace $Y$ from $\mathcal T$ meets $\langle e_{2t+3},\ldots,e_{k+t+1}\rangle\subset Y \subset \langle e_{1},\ldots,e_{k+t+1}\rangle$. 
 We represent every element $Y$ of $\mathcal T$ in the form
 $$ Y = Y' + \langle e_{2t+3},\ldots,e_{k+t+1}\rangle,\qquad\mbox{where } Y' = Y \cap \langle e_1,\ldots,e_{2t+2}\rangle.$$
 By $\mathcal T'$, we denote the set $\{Y' \mid Y\in \mathcal T\}$.
 
  It is known \cite{KMP:16:trades} (see also the Appendix, Lemma~\ref{l:k-1}) 
 that $\mathcal T'$ is splittable into a ${T}_q(t,t+1,v)$ bitrade $(\mathcal T'_0,\mathcal T'_1)$, $\mathcal T'_0 \cup\mathcal T'_1=\mathcal T'$. 
 It follows from Lemma~\ref{l:juce} (see also  \cite[Lemma~4.3]{BKKL:large}) 
 that $(\mathcal T'_0,\mathcal T'_1)$ is 
 a ${T}_q(s,t+1,v)$ bitrade for every $s\le t$.
 Now consider any subspace $X$ from $\GrSS{t}$.
 If $X\not\subset \langle e_1,\ldots,e_{k+t+1}\rangle$, 
 then, obviously, it is not included in any subspace from $\mathcal T$.
 If $X\subset \langle e_1,\ldots,e_{k+t+1}\rangle$, then we denote by $X'$ the projection of $X$ onto 
 $\langle e_1,\ldots,e_{2t+2}\rangle$. Denote the dimension of $X'$ by $s$.
 It is obvious that for every $Y$ from $\mathcal T$, $X \subset Y$ if and only if $X' \subset Y'$.
 Since $(\mathcal T'_0,\mathcal T'_1)$ is a ${T}_q(s,t+1,v)$ bitrade, it follows that $X$ is included in the same number of subspaces 
 from $\mathcal T_0$ and from $\mathcal T_1$.
 
 (B) The cardinality of $\mathcal T$ equals the cardinality of  $\mathcal T'$, which is known \cite{KMP:16:trades} (see also the Appendix)
 (actually, the elements of $\mathcal T'$ are the vertices of the subgraph of the Grassmann graph known as the dual polar graph of type $[D_d(q)]$, $d=t+1$, see e.g. \cite[9.4]{Brouwer}).
 
 (C) Let us first prove that 
 \emph{the weight distribution of 
 $\chi_{(\mathcal T_0,\mathcal T_1)}$ 
 with respect to 
${\HatH{Z'}}$ is equal to 
the weight distribution of 
 $\chi_{(\mathcal T'_0,\mathcal T'_1)}$ 
 with respect to 
$\{Z'\}$}.
Indeed, given 
$Y = Y' + \langle e_{2t+3},\ldots,e_{k+t+1}\rangle$ from 
$\mathcal{T}$, 
the element of $\HatH{Z'}$
nearest to $Y$ is obviously
$Z = Z' + \langle e_{2t+3},\ldots,e_{k+t+1}\rangle$.
So, $d(Y,\HatH{Z'}) = d(Y,Z)=d(Y',Z')$,
which means that the contribution of $Y$
to the weight distribution of 
 $\chi_{(\mathcal T_0,\mathcal T_1)}$ 
 with respect to 
${\HatH{Z'}}$ is the same as the contribution of $Y'$
to the weight distribution of
 $\chi_{(\mathcal T'_0,\mathcal T'_1)}$ with respect to 
$\{Z'\}$.
 
 The last weight distribution is known \cite{KMP:16:trades} (see also the Appendix, Lemma~\ref{l:5}) and equals one of (\ref{eq:wd-ex}),
 the first formula corresponding to the case 
 $Z'\in \mathcal{T}'_0$, the last, to $Z'\in \mathcal{T}'_1$.
 \end{proof}

 \subsection{Completely regular sets and weight distributions}\label{ss:wd}
 
A set $\mathcal{S}$ of  vertices of a connected graph is said to be 
\emph{completely regular} if for any $i$ and $j$, 
all vertices from $\mathcal{S}^{(i)}$ 
have the same number $s_{i,j}$ 
of neighbors in $\mathcal{S}^{(j)}$.
The numbers 
$s_{i,i-1}$,
$s_{i,i}$,
$s_{i,i+1}$,
$i=0,1,\ldots$,
are referred to as the
\emph{intersection numbers}
of the completely regular set $\mathcal{S}$.

\begin{lemma}%
[see, e.g., \cite{Kro:struct}]
\label{l:wd-calc}
Let $\varphi$ be an eigenfunction, with the eigenvalue $\theta$,
of a connected graph,
and let $\mathcal{S}$ 
be a completely regular set of vertices 
of the graph
with the intersection numbers
$s_{i,i-1}$, $s_{i,i}$, $s_{i,i+1}$, $i=0,1,\ldots$.
Then 
the weight distribution
$(W_\varphi^0,\ldots,W_\varphi^r)$ of $\varphi$
with respect to
$\mathcal{S}$
equals
\begin{equation}
\label{eq:W}
W_\varphi^0\cdot\left(w^0,\ldots,w^r\right),
\quad \mbox{where }
w^{i+1} = 
\frac{\theta w^{i}
- s_{i,i}w^{i}
- s_{{i{-}1},i}w^{i-1}}{s_{{i{+}1},i}}
, \  i=0,1,\ldots,r-1, \quad 
\end{equation}
$w^{-1}=0,\ w^0=1.$
\end{lemma}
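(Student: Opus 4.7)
The plan is to derive the three-term recurrence by applying the eigenfunction identity pointwise and then summing over each distance layer $\mathcal{S}^{(i)}$. For each vertex $X\in\mathcal{S}^{(i)}$, the defining equation of an eigenfunction gives $\sum_{Y\in\{X\}^{(1)}}\varphi(Y)=\theta\varphi(X)$. Summing over $X\in\mathcal{S}^{(i)}$ produces $\theta\, W_\varphi^i$ on the right-hand side.

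On the left, I would exchange the order of summation and classify each neighbor $Y$ of some vertex of $\mathcal{S}^{(i)}$ by its distance to $\mathcal{S}$. The triangle inequality forces $Y\in\mathcal{S}^{(i-1)}\cup\mathcal{S}^{(i)}\cup\mathcal{S}^{(i+1)}$, and the complete-regularity hypothesis says that each $Y\in\mathcal{S}^{(j)}$ gets counted exactly $s_{j,i}$ times. Adopting the convention $W_\varphi^{-1}:=0$ (since $\mathcal{S}^{(-1)}=\emptyset$), I obtain
$$\theta\, W_\varphi^i \;=\; s_{i-1,i}\,W_\varphi^{i-1} + s_{i,i}\,W_\varphi^{i} + s_{i+1,i}\,W_\varphi^{i+1},$$
which rearranges into the recursion displayed in (\ref{eq:W}). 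Dividing by $s_{i+1,i}$ is legitimate for $i<r$ because any shortest path from a vertex of $\mathcal{S}^{(i+1)}$ to $\mathcal{S}$ produces a neighbor in $\mathcal{S}^{(i)}$, forcing $s_{i+1,i}\ge 1$.

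The lemma then follows by a one-line induction. Define the scalars $w^i$ by the same recursion with $w^{-1}=0$ and $w^0=1$; both sequences $(W_\varphi^i)_{i\ge 0}$ and $(W_\varphi^0\cdot w^i)_{i\ge 0}$ satisfy the same linear three-term recurrence with matching initial data, so they coincide. I do not expect a genuine obstacle here: the argument is a standard double-counting identity, and the only points to handle with a little care are the base index $i=0$ (where the missing layer $\mathcal{S}^{(-1)}$ is absorbed by the convention $W_\varphi^{-1}=0$) and the degenerate case $W_\varphi^0=0$, in which the recurrence propagates zeros and the claim remains consistent.
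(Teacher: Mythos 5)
Your proposal is correct and is essentially the paper's own argument: the paper's (sketched) proof likewise sums the eigenfunction identity over $\mathcal{S}^{(i)}$ and uses complete regularity to count each $\varphi(Z)$, $Z\in\mathcal{S}^{(j)}$, exactly $s_{j,i}$ times, yielding the same three-term recurrence. You merely fill in details the paper leaves implicit (the restriction of neighbors to adjacent layers, $s_{i+1,i}\ge 1$, and the normalization $w^0=1$, which the statement misprints as $w^1=1$).
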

\begin{proof}[(a sketch)]
Consider the identity 
$$\sum_{X\in \mathcal{S}^{(i)}}
\theta\varphi(X)=\sum_{X\in \mathcal{S}^{(i)}}
\sum_{Z\in \{X\}^{(1)}}\varphi(Z)$$
and note that the value 
$\varphi(Z)$, $Z\in \mathcal{S}^{(j)}$, 
is included $s_{j,i}$
times in the right part. This gives a recursive formula for $W_\varphi^{i}$,
which is equivalent to (\ref{eq:W}).
\end{proof}

The next lemma follows from direct calculations, as well as from the symmetry.

\begin{lemma}\label{l:H-cr}
 For any $X\in \GrSS{j}$, $j\in \{0,1,\ldots,k\}$
 the set $\HatH{X}$ is completely regular. 
 The corresponding intersection numbers depend on $j$ but do not depend on the choice of $X$ in $\GrSS{j}$.
\end{lemma}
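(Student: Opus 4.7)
The plan is to prove both claims via the action of the general linear group $G=\mathrm{GL}(F^v)$ on $\Gr{k}$, which acts by automorphisms of $\GR{k}$ since it preserves the dimensions of intersections. The whole argument reduces to two transitivity statements combined with a distance computation.

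First, I would compute $\mathrm{d}(Y,\Hat{X})$ for an arbitrary $Y\in\Gr{k}$ in terms of $\dim(Y\cap X)$. Recall that the Gra\ss{}mann distance between two $k$-subspaces is $k-\dim$ of their intersection. For the lower bound, every $Z\in\Hat{X}$ satisfies $Y\cap Z\supseteq Y\cap X$, which gives $\mathrm{d}(Y,Z)\le k-\dim(Y\cap X)$... sorry, $\mathrm{d}(Y,Z)\ge k-\dim(Y\cap X)$... precisely, $\mathrm{d}(Y,Z)\ge j-\dim(Y\cap X)$ after a short calculation. For the matching upper bound, I would take any subspace $U\subseteq Y$ of dimension $k-j+\dim(Y\cap X)$ containing $Y\cap X$, set $Z:=X+U$, and verify that $Z$ is $k$-dimensional, contains $X$, and $Y\cap Z=U$. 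Hence $\mathrm{d}(Y,\Hat{X})=j-\dim(Y\cap X)$, so the distance partition of $\Gr{k}$ relative to $\Hat{X}$ coincides with the partition by the invariant $\dim(Y\cap X)$.

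Next, I would show that the stabilizer $G_X\le G$ acts transitively on each set $\Hat{X}^{(i)}$. This is a standard basis-extension argument: if $\dim(X\cap Y)=\dim(X\cap Y')=j-i$, pick bases of $X\cap Y$ and $X\cap Y'$, extend each to a basis of $X$, further to bases of $Y$ and $Y'$ respectively, and then to bases of $F^v$; the linear map sending one basis to the other lies in $G_X$ and sends $Y$ to $Y'$. Since $G_X$ acts by graph automorphisms preserving $\Hat{X}$ setwise, it permutes the sets $\Hat{X}^{(i)}$. Given any two $Y,Y'\in\Hat{X}^{(i)}$, the transporting element $g\in G_X$ carries the neighborhood of $Y$ bijectively to the neighborhood of $Y'$ while respecting every $\Hat{X}^{(i')}$; this is exactly the complete-regularity property.

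For the independence of the intersection numbers from $X\in\Gr{j}$, I would invoke the transitivity of $G$ on $\Gr{j}$: for any $X,X'\in\Gr{j}$ there exists $g\in G$ with $g(X)=X'$, and this $g$ maps $\Hat{X}$ bijectively onto $\Hat{X'}$ as a distance-preserving map of $\GR{k}$, so the two sets have identical intersection numbers. There is no real obstacle in the proof; the only care needed is in the distance computation, where the matching construction $Z=X+U$ must be given explicitly rather than argued by inequality alone.
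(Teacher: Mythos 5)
Your proof is correct and is exactly the ``symmetry'' argument that the paper invokes without writing out (the paper's proof of this lemma is the single remark that it follows from direct calculation or from symmetry): the distance formula $\mathrm{d}(Y,\Hat{X})=j-\dim(Y\cap X)$, transitivity of the stabilizer of $X$ in $\mathrm{GL}(F^v)$ on each distance class, and transitivity of $\mathrm{GL}(F^v)$ on $\Gr{j}$ together give both claims. No gaps; the explicit construction $Z=X+U$ correctly settles the achievability half of the distance computation.
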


\begin{lemma}\label{l:ckbckbh}
Let the integers $t$, $k$, and $v$ satisfy $0\le t<k<v-t$.
Let $(\mathcal T_0,\mathcal T_1)$ 
be a subspace bitrade of type ${T}_q(t,k,v)$,
and let $Z' \in \GrSS{t+1}$.
Then the weight 
distribution of 
$\chi_{(\mathcal T_0,\mathcal T_1)}$ with respect to $\HatH{Z'}$ is proportional to
{\rm (\ref{eq:wd-ex})}.
\end{lemma}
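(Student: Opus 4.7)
The plan is to decompose $\chi_{(\mathcal T_0,\mathcal T_1)}$ spectrally, argue that only its component in the single eigenspace $\Theta_{t+1}$ contributes to the weight distribution with respect to $\Hat{Z'}$, and then identify the proportionality constants by comparing with the explicit example from Lemma~\ref{l:duh}.

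First, by Lemma~\ref{l:juce} I can write $\chi_{(\mathcal T_0,\mathcal T_1)}=\varphi+\varphi'$ with $\varphi\in\Theta_{t+1}$ and $\varphi'\in\Theta_{t+2}+\ldots+\Theta_{\bar k}$; the hypothesis $t<k<v-t$ ensures $t+1\le\bar k$, so Lemma~\ref{l:chb}(i) applies and this decomposition is legitimate. The crucial step is to prove that each distance-class indicator $\chi_{\Hat{Z'}^{(i)}}$ lies in $\Lambda_{t+1}=\Theta_0+\ldots+\Theta_{t+1}$. Granted this, $\varphi'$ is orthogonal to every $\chi_{\Hat{Z'}^{(i)}}$, and consequently $W^i_{\chi_{(\mathcal T_0,\mathcal T_1)}}=W^i_\varphi$ for all $i$. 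I expect this containment to be the main obstacle. My approach is inductive: $\chi_{\Hat{Z'}^{(0)}}=\chi_{\Hat{Z'}}\in\Lambda_{t+1}$ by definition (since $Z'\in\Gr{t+1}$); since $\Lambda_{t+1}$ is a sum of eigenspaces of the adjacency operator $A$ of $\GR{k}$, it is $A$-invariant; and the complete regularity of $\Hat{Z'}$ (Lemma~\ref{l:H-cr}) yields the three-term recurrence $A\chi_{\Hat{Z'}^{(i)}}=s_{i-1,i}\chi_{\Hat{Z'}^{(i-1)}}+s_{i,i}\chi_{\Hat{Z'}^{(i)}}+s_{i+1,i}\chi_{\Hat{Z'}^{(i+1)}}$, which can be solved for $\chi_{\Hat{Z'}^{(i+1)}}$ since the coefficients $s_{i+1,i}$ are nonzero by connectivity. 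Inductively, each $\chi_{\Hat{Z'}^{(i)}}$ is a polynomial in $A$ applied to $\chi_{\Hat{Z'}}$, hence lies in $\Lambda_{t+1}$.

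Once $W^i_{\chi_{(\mathcal T_0,\mathcal T_1)}}=W^i_\varphi$ is in hand, $\varphi$ is an eigenfunction with eigenvalue $\theta_{t+1}$ and $\Hat{Z'}$ is completely regular, so Lemma~\ref{l:wd-calc} gives $W^i_\varphi=W^0_\varphi\cdot w^i$ for a fixed sequence $(w^0,\ldots,w^{t+1})$ depending only on $\theta_{t+1}$ and on the intersection numbers of $\Hat{Z'}$; by Lemma~\ref{l:H-cr} those numbers depend only on $\dim Z'=t+1$, not on the particular choice of $Z'$. To identify this sequence, I apply the same conclusion to the example bitrade of Lemma~\ref{l:duh}(3), whose weight distribution with respect to $\Hat{Z'}$ is known to equal (\ref{eq:wd-ex}) and has nonzero leading entry $\pm 1$. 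This forces $(w^0,\ldots,w^{t+1})$ itself to be proportional to (\ref{eq:wd-ex}), and therefore so is the weight distribution of $\chi_{(\mathcal T_0,\mathcal T_1)}$ for an arbitrary bitrade.
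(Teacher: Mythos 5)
Your proposal is correct and follows essentially the same route as the paper: spectral decomposition via Lemmas~\ref{l:juce} and~\ref{l:chb}, elimination of the components in $\Theta_{t+2},\ldots,\Theta_{\bar k}$ using the complete regularity of $\Hat{Z'}$, and identification of the universal sequence $(w^0,\ldots,w^{t+1})$ from the explicit bitrade of Lemma~\ref{l:duh}. The only (cosmetic) difference is in the middle step: the paper applies Lemma~\ref{l:wd-calc} to each higher eigencomponent, whose zeroth weight vanishes by orthogonality to $\chi_{\Hat{Z'}}$, while you dualize the same three-term recurrence to show each $\chi_{\Hat{Z'}^{(i)}}$ lies in the $A$-invariant space $\Lambda_{t+1}$ --- both arguments rest on exactly the same facts.
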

\begin{proof}
By Lemmas~\ref{l:chb} and~\ref{l:juce},
$\chi_{(\mathcal T_0,\mathcal T_1)}$ is orthogonal to 
$\Theta_0+\Theta_1+\ldots+\Theta_t$.
It follows that 
\begin{equation}
\label{eq:decomp}
\chi_{(\mathcal T_0,\mathcal T_1)}=\chi^{(t+1)}+\ldots+\chi^{(\bar k)}
\end{equation}
where $\chi^{(i)} \in \Theta_i$ is an eigenfunction corresponding to the eigenvalue $\theta_i$.

By Lemma~\ref{l:chb}, for every $i=t+2,\ldots,\bar k$, the eigenfunction $\chi^{(i)}$ is orthogonal to  $\chi_{\HatH{Z'}}$.
Hence, by Lemma~\ref{l:wd-calc}, the weight distribution of $\chi^{(i)}$ with respect to $\HatH{Z'}$ is the all-zero tuple for all $i=t+2,\ldots,\bar k$.
It follows that the weight distribution of $\chi_{(\mathcal T_0,\mathcal T_1)}$ (with respect to $\HatH{Z'}$) equals 
the weight distribution of $\chi^{(t+1)}$ and equals $W_{\chi^{(t+1)}}^0(w^0,\ldots,w^r)$, see (\ref{eq:W}).
Note that $w^0,\ldots,w^r$ do not depend on the choice of $Z'$, as the intersection numbers of $\HatH{Z'}$ are the same.
By the example of the bitrade considered in Section~\ref{ss:example}, we see that $(w^0,\ldots,w^r)$ has the form  (\ref{eq:wd-ex}).
\end{proof}

\subsection{A proof of the main theorem}\label{ss:th}
\begin{proof}[of Theorem~\ref{th:asdiy}]
The upper bound is given by the example above.
Consider a ${T}_q(t,k,v)$ subspace bitrade $(\mathcal{T}_0,\mathcal{T}_1)$.
Let us prove that $|\mathcal{T}_0|+|\mathcal{T}_1|$ is not less than (\ref{eq:main}).
We proceed by induction on $\bar k - t-1$. 

(i) {\it The induction base}. 
In the case $\bar k- t-1=0$, i.e., $\bar k= t+1$, the decomposition (\ref{eq:decomp}) consists of one eigenfunction $\chi^{(t+1)}$,
which is not constantly zero. 
The last means that by Lemma~\ref{l:chb}
there is at least one $Z' \in \GrSS{t+1}$ such that $\chi^{(t+1)}$ is not orthogonal to $\chi_{\HatH{Z'}}$.
In particular, the weight distribution of $\chi^{(t+1)}$ 
(and hence, of $\chi_{(\mathcal T_0,\mathcal T_1)}$)
with respect to $\HatH{Z'}$ is nonzero and proportional to (\ref{eq:wd-ex}).
Since the first element of (\ref{eq:wd-ex}) is $1$ and the weight distribution consists of integers, 
the coefficient cannot be less that $1$, in absolute value. 
It follows that  $\mathcal T_0$ and $\mathcal T_1$
has at least $q^{\frac{j(j-1)}2}\big[{t+1 \atop j }\big]_q$ elements (taking into account the multiplicities) 
at distance $j$ from $\HatH{Z'}$.
So, the total number of elements cannot be less than (\ref{eq:main}).

(ii) {\it The induction step}. Assume that $\bar k > t+1$. 
If $\chi^{(t+1)}\equiv 0$ in (\ref{eq:decomp}), then by Lemma~\ref{l:juce} $(\mathcal{T}_0,\mathcal{T}_1)$ 
is a ${T}_q(t+1,k,v)$ subspace bitrade. Using the induction hypothesis, we conclude that 
$|\mathcal{T}_0|+|\mathcal{T}_1|$ is greater than (\ref{eq:main}) 
(it is easy to see that (\ref{eq:main}) is monotonic in $t$).

If $\chi^{(t+1)}$
 is not constantly zero, then the arguments in (i) also work, taking into account that 
 by Lemma~\ref{l:ckbckbh} the other summands of the decomposition (\ref{eq:decomp})
 add nothing to the weight distribution.
\end{proof}

\section{Concluding remarks}\label{s:concl}

We established the minimum cardinality of a ${T}_q(t,k,v)$ bitrade.

The question if a minimum ${T}_q(t,k,v)$ bitrade is unique remains open.  
The uniqueness in the case $k=t+1$ was established by Cho~\cite{Cho:98}.  
Another related result was obtained by Pankov \cite{Pankov:Dd-to-Gr}:
it was shown that embedding of the dual polar graph $[D_d(q)]$ in the Grassmann graph is unique, up to isomorphism
(as follows from the definition \cite[9.4]{Brouwer}, the minimum bitrade described in Section~\ref{ss:example} induces 
a dual polar subgraph $[D_d(q)]$ in the Grassmann graph).
In the ordinary case, bitrades of minimum volume are not unique if $k>t+1$. 
For example, $(\{\{1,2,3,4\},\{5,6,7,8\}\},\{\{1,2,3,8\},\{5,6,7,4\}\})$ and $(\{\{1,2,3,4\},\{5,6,7,8\}\},\{\{1,2,7,8\},\{5,6,3,4\}\})$
are non-isomorphic $T(1,4,8)$ trades. 
However, the situation with the subspace $T_q(1,k,v)$ trades is different, 
and it is natural to conjecture that it is so for any $t$:
\begin{proposition}\label{p:unique}
 For each $k\ge 2$, $v\ge k+2$, and prime power $q$, there is only one $T_q(1,k,v)$ trade, up to isomorphism.
\end{proposition}
\begin{proof}
 It is sufficient to show that all the elements of a trade intersect in one $(k-2)$-dimensional subspace $Z$.
 In this case, factorization by $Z$ leads to a  $T_q(1,2,v-k+2)$ trade, a case solved in~\cite{Cho:98}.
 
 Let $(\mathcal T_0,\mathcal T_1)= (\{X_0, \ldots, X_q\},\{Y_0, \ldots, Y_q\})$ be a $T_q(1,k,v)$ trade.
 The $k$-dimensional space $X_0$ has $(q^k-1)/(q-1)$ one-dimensional subspaces.
 Since $((q^k-1)/(q-1))/(q+1)> (q^{k-2}-1)/(q-1)$, at least one of $q+1$ elements of $\mathcal T_1$, 
 say $Y_0$, covers $(q^{k-1}-1)/(q-1)$
 of those $1$-dimensional subspaces. 
 Next, each of $Y_i$, $i=1,\ldots,q$ covers at most $(q^{k-1}-1)/(q-1)$ one-dimensional subspaces of $X_0$,
 and at most $(q^{k-1}-1)/(q-1) - (q^{k-2}-1)/(q-1) = q^{k-2}$ of them are ``new'', i.e., not covered by $Y_0$, \ldots, $Y_{i-1}$.
 Since $Y_0$ does not cover $q^{k-1}=q\cdot q^{k-2}$ one-dimensional subspaces of $X_0$, this number is precisely 
 $q^{k-2}$. If follows that
  
  (i) $\dim(X_0\cup Y_i)=k-1$, $i=0,\ldots,q$; 

  (ii) $\dim(X_0\cup Y_i\cup Y_j)=k-2$, $i\ne j$; 
  
  (iii) $X_0\cup Y_0\cup Y_1 = X_0\cup Y_0\cup Y_i$, $i=2,\ldots,q$.
  
  We see that all $Y$s intersect in a same $(k-2)$-dimensional subspace $Z$, which is also a subspace of $X_0$. 
  Similarly, $Z$ is a subspace of $X_1$, \ldots, $X_q$.
\end{proof}

Another question is if there are subspace designs 
including a minimum trade as a subset.
What is known is that among the ${S}_q(2,3,13)$ systems found in \cite{BEOVW:q-Steiner}
(which are currently the only known nontrivial ${S}_q(t,k,v)$ systems with $t\ge 2$)
several tested systems do not include minimum trades
\cite{Alfred:2015}.

Finally, we mention another variant of the minimum-volume problem of a subset trade.
In \cite{Cho:98}, it is conjectured that the minimum number of \emph{different} elements in a subspace 
$T_q(t,k,v)$ trade $(\mathcal T_0, \mathcal T_1)$  
is (\ref{eq:main}) for $k>t+1$ (for $k=t+1$, the problem is solved). 
The result of the current paper confirms the conjecture for the case when 
$\mathcal T_0$, $\mathcal T_1$ are ordinary sets (as the number 
of different elements coincides with the cardinality in this case), 
but in general case, with repetitions allowed, similar arguments do not work,
and the problem remains unsolved, with the exception of the simple case $t=1$
(see the proof of Proposition~\ref{p:unique}).

\section*{Appendix. The structure of a minimum trade in the case $k=t+1$}

As was mentioned above, in the case $k=t+1$, 
the properties (A), (B) and (C) of the trade considered in Lemma~\ref{l:duh}
are already known.
However, the proof of these properties was based on a more general theory, 
which is not reasonable to be described in the current paper.
For completeness, we give here a direct proof of these properties, 
based on the generator matrices in the reduced row echelon form.
An $l\times n$ matrix of rank $l$ is called a \emph{reduced row echelon} matrix
if 
the leading coefficient (the first nonzero element from the left) 
of a row is always $1$, 
it is to the right of the leading coefficients of all rows above,
and it is an only nonzero element of the corresponding column.
For a given subspace, a reduced row echelon generator matrix always exists, unique, 
and can be found from any generator matrix by Gauss--Jordan elimination 
(the basis of the space is considered to be fixed).

Let us consider a subspace 
$U$ of a $2k$-dimensional space $V$
such that all vectors  $\bar x = (x_1,\ldots ,x_{2k})$ of $U$ meet
\begin{equation}\label{eq:0} 
Q(\bar x)=0, \qquad\mbox{where } Q(\bar x) := x_1x_{2k}+x_2x_{2k-1}+\ldots +x_k x_{k+1}.
\end{equation}

For $\bar x =(x_1,\ldots ,x_{2k})$ and $\bar y =(y_1,\ldots ,y_{2k})$, denote 
\begin{equation}\label{eq:1}
 \lfloor \bar x, \bar y \rceil := x_1 y_{2k}+x_2y_{2k-1}+\ldots+x_{2k} y_1.
\end{equation}

\begin{lemma}\label{l:1}
 All $\bar x$ and $\bar y$ from $U$ satisfy 
$
 \lfloor \bar x, \bar y \rceil = 0.
$
\end{lemma}
\begin{proof}
 It is easy to see that 
$
 Q(\bar x+\bar y) = Q(\bar x) + \lfloor \bar x, \bar y \rceil + Q(\bar y).
$
Since $\bar x+\bar y$, $\bar x$, $\bar y\in U$, we have 
$Q(\bar x+\bar y) = Q(\bar x) = Q(\bar y) = 0$. 
Hence, $\lfloor \bar x, \bar y \rceil$ is $0$ too.
\end{proof}

Let $\bar u^1$, \ldots , $\bar u^l$ be the rows of the reduced row echelon generator matrix $\bar U$ of $U$.
Let $r_i$ be the position of the leading one (the first nonzero element) in $\bar u^i$.
\begin{lemma}\label{l:2}
For all $i$ and $j$ from $1$ to $\dim(U)$, we have $r_i \ne 2k+1-r_j$.
In other words, each of the pairs $\{ 1,2k\}$, $\{2,2k-1\}$, \ldots, $\{k,k+1\}$ 
contains at most one leading one of $\bar U$.
\end{lemma}
\begin{proof}
Seeking a contradiction, assume $r_i = 2k+1-r_j$.
Then, there is only one $l$ (namely, $l=r_i$) such that $\bar u^i$ has nonzero in the $l$th position
and $\bar u^j$  has nonzero in the $(2k+1-l)$th position.
From (\ref{eq:1}), 
where $\bar x =\bar u^i$ and  $\bar y =\bar u^j$,
we see that $\lfloor \bar u^i, \bar u^j \rceil = 1$, 
which contradicts Lemma~\ref{l:1}.
\end{proof}
\begin{corollary}\label{c:1}
 The dimension of $U$ does not exceed $k$.
 If $\dim(U)=k$, then each of the pairs 
$\{ 1,2k\}$, $\{2,2k-1\}$, \ldots, $\{k,k+1\}$
intersects with $\{r_1,\ldots,r_k\}$ 
in exactly one element.
\end{corollary}
Denote by $u^{i}_{j}$ the value of the $(2k+1-r_j)$th position of $\bar u^i$.
Note that if $\dim(U)=k$, then $\bar U$ does not have nonzero elements other than $u^i_j$s and leading ones, see e.g. Fig.~\ref{f:k}.
\begin{figure}
$$
\definecolor{Green}{rgb}{0,0.5,0}
\definecolor{gray}{rgb}{0.5,0.5,0.5}
\begin{array}{|>{\columncolor[gray]{0.95}}cc>{\columncolor[gray]{0.95}}cc>{\columncolor[gray]{0.95}}c>{\columncolor[gray]{0.95}}cc|>{\columncolor[gray]{0.95}}ccc>{\columncolor[gray]{0.95}}cc>{\columncolor[gray]{0.95}}cc|}
\hline
1 & \color{blue}u^1_7 & 0 & \color{blue}u^1_6 & 0 & 0 & \color{blue}u^1_5 & 0 & \color{blue}u^1_4 & \color{blue}u^1_3 & 0 & \color{blue}u^1_2 & 0 & \color{gray}\cancel{u^1_1} 
\\ \rowcolor[gray]{0.9} &\cdot&\cdot&\cdot&\cdot&\cdot&\cdot&\cdot&\cdot&\cdot&\cdot&\cdot&\cdot&\cdot \\
  &       & 1 & \color{blue}u^2_6 & 0 & 0 & \color{blue}u^2_5 & 0 & \color{blue}u^2_4 & \color{blue}u^2_3 & 0 & \color{gray}\cancel{u^2_2}& 0 & \color{Green}u^2_1 
\\ \rowcolor[gray]{0.9} &&&\cdot&\cdot&\cdot&\cdot&\cdot&\cdot&\cdot&\cdot&\cdot&\cdot&\cdot  \\
  &       &   &       & 1 & 0 & \color{blue}u^3_5 & 0 & \color{blue}u^3_4 & \color{gray}\cancel{u^3_3} & 0 & \color{Green}u^3_2 & 0 & \color{Green}u^3_1 \\
  &       &   &       &   & 1 & \color{blue}u^4_5 & 0 & \color{gray}\cancel{u^4_4} & \color{Green}u^4_3 & 0 & \color{Green}u^4_2 & 0 & \color{Green}u^4_1 
\\ \rowcolor[gray]{0.9} &&&&&&\cdot&\cdot&\cdot&\cdot&\cdot&\cdot&\cdot&\cdot \\ \hline
  &       &   &       &   &   &       & 1 & \color{Green}u^5_4 & \color{Green}u^5_3 & 0 & \color{Green}u^5_2 & 0 & \color{Green}u^5_1 
\\ \rowcolor[gray]{0.9} &&&&&&&&\cdot&\cdot&\cdot&\cdot&\cdot&\cdot 
\\ \rowcolor[gray]{0.9} &&&&&&&&&\cdot&\cdot&\cdot&\cdot&\cdot \\
  &       &   &       &   &   &       &   &       &       & 1 & \color{Green}u^6_2 & 0 & \color{Green}u^6_1 
\\ \rowcolor[gray]{0.9} &&&&&&&&&&&\cdot&\cdot&\cdot \\
  &       &   &       &   &   &       &   &       &       &   &       & 1 & \color{Green}u^7_1 
\\ \rowcolor[gray]{0.9} &&&&&&&&&&&&&\cdot \\ \hline
\end{array}
\qquad { u^i_j = -u^j_i, \atop u^i_i = 0.}
$$
\caption{An example of $\bar U$ in the case $\dim(U)=k$. The empty grayed lines are inserted to emphasize some symmetry with respect to the secondary diagonal.}\label{f:k}
\end{figure}
\begin{lemma}\label{l:3}
Assume that $\dim(U)=k$.
Then 
for all  $i$ and $j$ from $1$ to $k$, we have $u^{i}_{j}=-u^{j}_{i}$; moreover, $u^{i}_i=0$.
\end{lemma}
\begin{proof}
Consider (\ref{eq:1}), 
where $\bar x =\bar u^i$ and  $\bar y =\bar u^j$.
If $l\ne r_i,2k+1-r_j$, 
then $x_l y_{2k+1-l} = 0$ 
(indeed, either $l$ 
is the position of the leading one of some $\bar u^s\ne \bar u^i$,
or $2k+1-l$ is the position of the leading one of some $\bar u^s\ne \bar u^{j}$).

So, (\ref{eq:1}) becomes 
$ x_{r_i} y_{2k+1-r_i} + x_{2k+1-r_j} y_{r_j}$
if $i\ne j$, and 
$ x_{r_i} y_{2k+1-r_i}$
if $i=j$.
Since $ x_{r_i} = y_{r_j}=1$, $y_{2k+1-r_i} = u^j_i$, and $x_{2k+1-r_j} = u^i_j$,
we find from Lemma~\ref{l:1} that $ u^j_i+u^i_j=0$ and $ u^i_i=0$.
\end{proof}

\begin{lemma}\label{l:4}
Let $\bar U$ be a  reduced row echelon $k\times 2k$ matrix such that 

(A)
the leading ones $r_1$, \ldots, $r_k$ 
occur once in each of the pairs 
$\{ 1,2k\}$, $\{2,2k-1\}$, \ldots, $\{k,k+1\}$;

(B) for all $i$, $j$ from $1$ to $k$, it holds $u^{i}_{j}=-u^{j}_{i}$ and  $u^{i}_{j}=0$,
in the notation of Lemma~\ref{l:3}.

Then,  all vectors $\bar x = (x_1,\ldots ,x_{2k})$ of the space $U$ spanned by the rows of $\bar U$ meet (\ref{eq:0}).
\end{lemma}
\begin{proof}
 From the proof of Lemma~\ref{l:3} 
 we see that $\lfloor \bar u^i, \bar u^j \rceil=0$
 for all $i$, $j$ from $1$ to $k$.
 Similarly, $Q(\bar u^i)=0$.
 Now,
 
 $\displaystyle Q(\alpha_1 \bar u^1+\ldots+\alpha_k \bar u^k) = 
 \sum_{i=1}^k \alpha_i^2 Q(\bar u^i) + \sum_{i=1}^{k-1}\sum_{j=i+1}^{k}\alpha_i\alpha_j \lfloor \bar u^i, \bar u^j \rceil=0
 .$
\end{proof}
For a subspace $U$,
denote by $s(U)$ the number of rows with leading ones from $1$ to $k$ in the  reduced row echelon matrix generating $U$.
Note that there is a unique subspace $U_0$ of dimension $k$ with $s(U_0)=0$, 
and for any other subspace $U$ of dimension $k$, it holds $s(U)=d(U,U_0)$.
\begin{lemma}\label{l:5}
For $s\in \{0,\ldots ,k\}$, the number of the matrices $U$ meeting the hypothesis of Lemma~\ref{l:4} 
and satisfying $s(U)=s$ is
 $q^{s(s-1)/2}\left[{k\atop s}\right]_q$.
\end{lemma}
\begin{proof}
The upper-left $s\times k$ submatrix $U^{\mathrm{UL}}$ of $U$ is an arbitrary $s\times k$  reduced row echelon matrix.
The number of such matrices is $\left[{k\atop s}\right]_q$ (the number of $s$-dimensional subspaces of a $k$-dimensional space).

The lower-right $(k-s)\times k$  submatrix of $U$ if uniquely determined by $U^{\mathrm{UL}}$, by Lemmas~\ref{l:2} and~\ref{l:3}.

The lower-left $(k-s)\times k$ submatrix consists of zeros.

The upper-right $s\times k$  submatrix of $U$ consists of $s(k-1)$ zeros and the elements $u^i_j$, where $i,j\le s$.
From these elements, $s(s-1)/2$ can be chosen arbitrarily (for example, with $i<j$); the rest is uniquely determined by 
Lemma~\ref{l:3}.
\end{proof}
\begin{lemma}\label{l:k-1}
 Assume that $\dim(U)=k-1$. 
 Then, $U$ is included in exactly two subspaces $W$ of dimension $k$ whose all vectors meet~(\ref{eq:0}), 
 one with $s(W)=s(U)$ and one with $s(W)=s(U)+1$.
\end{lemma}
\begin{proof}
 Consider the  reduced row echelon matrix $\bar U$ generating $U$. 
 As before, $r_1$, \ldots, $r_{k-1}$ denote the positions of the leading ones of the rows of $\bar U$.
 Consider a subspace $W$ satisfying the conditions of the lemma.
 Then $W$ is spanned by the rows of $\bar U$ and an additional vector $\bar w$.
 We can assume that $\bar w$ has zeros in the positions $r_1$, \ldots, $r_{k-1}$
 and has a leading one in some position $r'$. 
 By Lemma~\ref{l:2}, 
 only one pair of $\{ 1,2k\}$, $\{2,2k-1\}$, \ldots, $\{k,k+1\}$ does not contain one of
 $r_1$, \ldots, $r_{k-1}$, and this pair contains $r'$.
 So, there are only two possibilities to choose $r'$, one (when $r'>k$) for $s(W)=s(U)$ and
 one (when $r' \le k$) for $s(W)=s(U)+1$. It remains to show that for each of these two cases,
 $\bar w$ is uniquely determined. 
 Indeed, in the positions $r_i$, $i=1,\ldots ,k-1$, the vector $\bar w$ has zeros, the other positions (see examples at Fig.~\ref{f:k-1}):
 
\begin{figure}
$$
\definecolor{LightCyan}{rgb}{0.88,1,1}
\begin{array}{r|cccc>{\columncolor{LightCyan}}ccc|cc>{\columncolor{LightCyan}}ccccc|l}
\hhline{~|-------|-------|}
\bar u_1: &1 & u^1_6 & 0 & u^1_5 & y^1 & 0 & u^1_4 & 0 & u^1_3 & x^1 & 0 & u^1_2 & 0 & {u^1_1} &\ \  \mbox{\underline{$s(W)=s(U)+1$}} \\
\bar u_2: &  &       & 1 & u^2_5 & y^2 & 0 & u^2_4 & 0 & u^2_3 & x^2 & 0 & {u^2_2}& 0 & u^2_1 \\
\bar u_3: &  &       &   &       &   & 1 & u^3_4 & 0 & {u^3_3} & x^3 & 0 & u^3_2 & 0 & u^3_1 \\ \hhline{~|-------|-------|}
\bar u_4: &  &       &   &       &   &   &       & 1 & u^4_3 & x^4 & 0 & u^4_2 & 0 & u^4_1 \\
&  &       &   &       &   &   &       &   &       &       & 1 & u^5_2 & 0 & u^5_1 \\
&  &       &   &       &   &   &       &   &       &       &   &       & 1 & u^6_1
& \ \  Q(\bar w) {=} 0 \Rightarrow w_0 {=} 0
\\ \hhline{~|-------|-------|}
\bar w:&  &       &   &       & 1 & 0 & w_4 & 0 & w_3 & \cancel{w_0} & 0 & w_2 & 0 & w_1
& \ \  \lfloor \bar w, \bar u_i\rceil {=} 0 \Rightarrow w_i{=}{-x^i}

  \end{array}
$$
$$
\definecolor{LightCyan}{rgb}{0.88,1,1}
\begin{array}{r|cccc>{\columncolor{LightCyan}}ccc|cc>{\columncolor{LightCyan}}ccccc|l}
\hhline{~|-------|-------|}
\bar u_1: &1 & u^1_6 & 0 & u^1_5 & y^1 & 0 & u^1_4 & 0 & u^1_3 & x^1 & 0 & u^1_2 & 0 & {u^1_1} &\ \  \mbox{\underline{$s(W)=s(U)$}} \\
\bar u_2: &  &       & 1 & u^2_5 & y^2 & 0 & u^2_4 & 0 & u^2_3 & x^2 & 0 & {u^2_2}& 0 & u^2_1 \\
&  &       &   &       &   & 1 & u^3_4 & 0 & {u^3_3} & x^3 & 0 & u^3_2 & 0 & u^3_1 \\ \hhline{~|-------|-------|}
&  &       &   &       &   &   &       & 1 & u^4_3 & x^4 & 0 & u^4_2 & 0 & u^4_1 \\
&  &       &   &       &   &   &       &   &       &       & 1 & u^5_2 & 0 & u^5_1 \\
&  &       &   &       &   &   &       &   &       &       &   &       & 1 & u^6_1
& 
\\ \hhline{~|-------|-------|}
\bar w:&  &       &   &       &  &  &  &  &  & 1 & 0 & w_2 & 0 & w_1
& \ \  \lfloor \bar w, \bar u_i\rceil {=} 0 \Rightarrow w_i{=}{-y^i}

  \end{array}
$$
\caption{An example of the case $\dim(U)=k-1$: completing to dimension $k$.}\label{f:k-1}
\end{figure}
 
 a) $r_i$, $i=1,\ldots ,k-1$, in which $\bar w$ has $0$;
 
 b) $r'$, in which $\bar w$ has $1$;
 
 c) $2k+1-r'$, in which $\bar w$ has $0$, as $Q(\bar w)=0$;
 
 d) $2k+1-r_i$, $i=1,\ldots ,k-1$, in which the value of $\bar w$ is uniquely determined by 
 $\lfloor \bar w, \bar u_i \rceil = 0$ 
 (if $2k+1-r_i < r'$, then this value is $0$ as $\bar w$ and $\bar u_i$ has no common nonzero positions).
 
 After we uniquely (for each $r'$) determine $\bar w$ from the rules above, we have $Q(\bar w)=0$ and 
 $\lfloor \bar w, \bar u_i \rceil = 0$, $i=1,\ldots ,k-1$.
 By the arguments similar to the proof of Lemma~\ref{l:4}, all vectors of $W$ meet~(\ref{eq:0}). 
\end{proof}

We conclude the appendix with the note that the statement of Lemma~\ref{l:duh} for $k=t+1$ 
is straightforward from 
Lemmas~\ref{l:5} and~\ref{l:k-1}. 
To see this, assume without loss of generality that $v=2t+2$ 
(the zero coordinates with larger indices do not play any role), 
set $\mathcal T$ to consist of all $k$-dimensional subspaces $U$ of $V$ 
whose all vectors satisfy (\ref{eq:0}), and split it to 
$\mathcal T_0$, $\mathcal T_1$ in accordance to the parity of $s(U)$.

\begin{remark}
It is not difficult to ``lift'' the representation via reduced row echelon matrices from the case 
$k=t+1$, $v=2t+2$ to an arbitrary case considered in Lemma~\ref{l:duh}. 
To do this, one should add $k-t-1$ new rows and $v-2t-2$ new columns, all filled with zeros 
except $k-t-1$ leading ones of the new rows, in positions from 
$2t+3$ to $k+t+1$.
\end{remark}

\section*{Acknowledgement} 

The author thanks Professor Dennis W. Stanton for noticing the papers \cite{Cho:98,Cho:99}.
The work was funded by the Russian Science Foundation (Grant 14-11-00555).

\smallskip\smallskip
\providecommand\href[2]{#2} \providecommand\url[1]{\href{#1}{#1}}
  \def\DOI#1{{\small {DOI}:
  \href{http://dx.doi.org/#1}{#1}}}\def\DOIURL#1#2{{\small{DOI}:
  \href{http://dx.doi.org/#2}{#1}}}

\end{document}